\def\supind#1{${}^\mathrm{#1}$}
\def\title#1{\begin{center}{\LARGE\bf #1}\end{center}}
\renewcommand{\author}[2]{\begin{center}{#1}\\ \medskip{\small\it #2}\end{center}}
\newcommand{\I}{\mathrm{i}}
\newcommand{\re}{\mathop{\mathrm{Re}}}
\newcommand{\im}{\mathop{\mathrm{Im}}}
\numberwithin{equation}{section} \theoremstyle{plain}
\newtheorem{prop}{Proposition}[section]
\theoremstyle{definition}
\theoremstyle{remark}
\begin{document}

\title{A mode parabolic equations method with the resonant mode interaction}

\author{Trofimov~M.Yu.\supind{1}, Kozitskiy~S.B.\supind{1}, Zakharenko~A.D.\supind{1,2}}
{\supind{1}Il'ichev Pacific Oceanological Institute, 43 Baltiiskaya St., Vladivostok, 690041, Russia\\
\supind{2} Far Eastern Federal University, 8 Sukhanova str., Vladivostok, 690950, Russia\\
e-mail: {\tt trofimov@poi.dvo.ru, skozi@poi.dvo.ru,
zakharenko@poi.dvo.ru} }

\begin{abstract}
A mode parabolic equation method for resonantly interacted modes was developed.
The flow of acoustic energy is conserved for the derived equations with an accuracy adequate to the used approximation.
The testing calculations were done for ASA wedge benchmark and proved excellent agreement with COUPLE program.
\end{abstract}

\section{Introduction}

Adiabatic acoustic equations appeared as a convenient tool for
solving three-dimensional problems of ocean acoustics since the work
of Collins~\cite{collins} and in refined version since the work of
Trofimov~\cite{tr_na}. Further this method was extended to
interactive modes~\cite{coll2}.
Needless to say, that all this goes back to Burridge and
Weinberg~\cite{Burridge}.
Approach to interacting modes was known outside the parabolic
equation scope~\cite{couple,cracken,abawi}.

Here the method of adiabatic mode parabolic equation is extended to
the case of resonantly interacting modes, more concretely the
considered interaction arises when the wavenumbers of modes are
close to each other. The most intensive interaction of such a type
is observed when the mode of discrete spectrum transforms into the
mode of continuous spectrum, during the propagation, or vice versa.
Such transmutations of modes are common for shallow-water acoustics.
We derive a system of mode parabolic equations which describes this
situation.
It can be easily solved numerically by the Crank-Nicholson implicit
difference scheme in combination with the Gauss-Seidel iteration
method. The application of the corresponding computer code to the
ASA wedge benchmark problem gives the excellent results.

An additional advantage of our method is that the systematic use of
multiscale expansions gives to the applied approach strictness and
completeness.

\section{Basic equations and boundary conditions}

We consider the propagation of time-harmonic sound in the three-dimen\-si\-onal waveguide
$\Omega=\{(x,y,z)| 0 \leq x\leq \infty, -\infty \leq y\leq \infty,  -H\leq z \leq 0 \}$ ($z$-axis is directed upward),
described by the acoustic Helmholtz equation
\begin{equation} \label{Helm}
\left(\gamma P_x\right)_x + \left(\gamma P_y\right)_y
 + \left(\gamma P_z\right)_z + \gamma\kappa^2 P = 0\,,
\end{equation}
where $\gamma = 1/\rho$, $\rho=\rho(x,y,z)$ is the density, $\kappa(x,y,z)$ is the wavenumber.
We assume the appropriate radiation conditions at infinity in $x,y$ plane, the pressure-release
boundary condition at $z=0$
\begin{equation} \label{Dir}
P=0\quad \text{at}\quad z=0\,,
\end{equation}
and rigid boundary condition $\partial u/\partial z=0$ at $z= -H$.
At $x=0$ we impose the Dirichlet boundary condition
\begin{equation} \label{Source}
P=g(z,y)\quad \text{at}\quad x=0\,,
\end{equation}
modelling the sound source located outside $\Omega$.
The parameters of medium may be discontinuous at the nonintersecting smooth interfaces  $z=h_1(x,y),\ldots,h_m(x,y)$,
where the usual interface conditions\sloppy
\begin{equation}\label{InterfCond}
\begin{split}
P_+ = P_-\,,\\
\gamma_+\left(\frac{\partial P}{\partial z}-h_x\frac{\partial P}{\partial x}-
h_y\frac{\partial P}{\partial y}\right)_+ =
\gamma_-\left(\frac{\partial P}{\partial z}-h_x\frac{\partial P}{\partial x}-
h_y\frac{\partial P}{\partial y}\right)_-\,,
\end{split}
\end{equation}
are imposed. Hereafter we use the denotations $f(z_0,x,y)_+=\lim_{z\downarrow z_0}f(z,x,y)$ and
$f(z_0,x,y)_-=\lim_{z\uparrow z_0}f(z,x,y)$.
\par
As will be seen below, it is sufficient to consider the case $m=1$, so we set $m=1$ and denote $h_1$ by $h$.
\par
Assuming that $x$-axis is the preferred direction of propagation, we introduce a small parameter $\epsilon$ (the ratio
of the typical wavelength to the typical size of medium inhomogeneities), the slow
variables $X=\epsilon x$ and $Y=\epsilon^{1/2}y$ (the so called ``parabolic scaling'') and postulate the following
expansions for the parameters $\kappa^2$, $\gamma$ and $h$:
\begin{equation}\label{expans}
\begin{split}
 & \kappa^2 = n_0^2(X,z) + \epsilon\nu(X,Y,z)\,, \\
 & \gamma  =  \gamma_0(X,z) + \epsilon\gamma_1(X,Y,z)\,,\\
 & h = h_0(X) + \epsilon h_1(X,Y)\,.
\end{split}
\end{equation}
To model the attenuation effects we admit $\nu$ to be complex. Namely, we take $\im\nu = 2\eta\beta n_0$, where
$\eta = (40\pi\log_{10}e)^{-1}$ and $\beta$ is the attenuation  in decibels per wavelength. This implies that
$\im\nu\ge 0$.
\par
At first we consider a solution to the Helmholtz equation (\ref{Helm}) in the form of the WKB-ansatz
\begin{equation}\label{ansatz}
P = (u_0(X,Y,z) +\epsilon u_1(X,Y,z) +\ldots)\exp\left(\frac{\mathrm{i}}{\epsilon}\theta(X,Y,z)\right)\,.
\end{equation}
Introducing this anzatz into equation (\ref{Helm}), boundary condition (\ref{Dir}) and interface conditions
(\ref{InterfCond}), all rewritten in the slow variables, we obtain the sequence of the boundary value problems at each
order of $\epsilon$.
\par
From the equations at $O(\epsilon^{-2})$ and $O(\epsilon^{-1})$ we can conclude that $\theta$ is independent of $z$ and $Y$.
Using this information and the Taylor expansion, we can formulate the interface conditions at $h_0$ which are equivalent
to interface conditions (\ref{InterfCond}) up to $O(\epsilon^2)$:
\begin{equation}\label{InterfCondh01}
\left(u_{0}+\epsilon h_1 u_{0z}\right)_+ =(\text{the same terms})_-\,,
\end{equation}

\begin{equation}\label{InterfCondh02}
\begin{split}
\left((\gamma_{0}+\epsilon h_1 \gamma_{0z}+\epsilon \gamma_{1}
\times\left(u_{0z}+\epsilon h_1 u_{0zz}+\epsilon u_{1+}  - \epsilon\mathrm{i}k h_{0X}u_{0}\right)\right)_+ \\
= \left(\mbox{the same terms}\right)_-\,.
\end{split}
\end{equation}

\section{The problem at $O(\epsilon^{0})$}
At $O(\epsilon^{0})$ we obtain
\begin{equation}\label{E0}
(\gamma_0 u_{0z})_z + \gamma_0 n^2_0 - \gamma_0(\theta_X)^2u_0 = 0\,,
\end{equation}
with the interface conditions of the order $\epsilon^{0}$
\begin{equation}\label{InterfaceE0}
u_{0+} = u_0{-}\,,\qquad \left(\gamma_0 \frac{\partial u_0}{\partial z}\right)_+ =
\left(\gamma_0 \frac{\partial u_0}{\partial z}\right)_-\quad\mbox{at}\quad z=h_0\,,
\end{equation}
and boundary conditions $u=0$ at $z=0$ and $\partial u/\partial x$ at $z= -H$.
We seek a solution to problem (\ref{E0}), (\ref{InterfaceE0}) in the form
\begin{equation} \label{anz0}
u_0 = A(X,Y)\phi(X,z)\,.
\end{equation}
From eqs.~(\ref{E0}) and (\ref{InterfaceE0}) we obtain the following spectral problem
for $\phi$ with the spectral parameter $k^2 = (\theta_X)^2$
\begin{equation} \label{Spectral}
\begin{split}
\left(\gamma_0\phi_z\right)_z + \gamma_0 n_0^2 \phi - \gamma_0 k^2 \phi=0\,,\\
\phi(0) = 0\,,\\
\frac{\partial \phi}{\partial z}=0\quad \text{at}\quad z= -H\,,\\
\phi_+ = \phi_-\,,\qquad \left(\gamma_0 \frac{\partial \phi}{\partial z}\right)_+ =
\left(\gamma_0 \frac{\partial \phi}{\partial z}\right)_-\quad\mbox{at}\quad z=h_0\,.
\end{split}
\end{equation}
This spectral problem, considering in the Hilbert space $L_{2,\gamma_0}[-H,0]$ with the scalar product
\begin{equation} \label{L2scalar}
(\phi,\psi) = \int_{-H}^{\,0}\gamma_0 \phi\psi\,dz\,,
\end{equation}
has countably many solutions $(k_j^2,\phi_j)$, $j=1,2,\ldots$ where
the eigenfunction can be chosen as real functions. The eigenvalues
$k_j^2$ are real and have $-\infty$ as a single accumulation
point\cite{nai}.
\par
Let $u_0=A_j(X,Y)\phi_j(X,z)$ where $\phi_j$ is a normalized eigenfunction with the corresponding eigenvalue
$k_j^2>0$ and $A_j$ is an amplitude function to be determined at the next order of $\epsilon$.
The normalizing condition is
\begin{equation} \label{norm}
(\phi,\phi) = \int_{-H}^{\,0}\gamma_0 \phi^2\,dz\,,
\end{equation}

\section{The derivatives of eigenfunctions and wavenumbers with
respect to $X$} Before considering the problem at $O(\epsilon^{1})$
we should consider the problem of calculation the derivatives of
eigenfunctions and wavenumbers with respect to $X$.
\par
Differentiating spectral problem (\ref{Spectral}) with respect to $X$ we obtain the boundary value problem
for $\phi_{jX}$
\begin{equation}\label{phi_X}
\begin{split}
&\left(\gamma_0\phi_{jXz}\right)_z + \gamma_0 n_0^2 \phi_{jX} - \gamma_0 k_j^2 \phi_{jX} = \\
& \qquad\qquad\qquad = - \left(\gamma_{0X}\phi_{jz}\right)_z -
(\gamma_0 n_0^2)_X \phi_{j} + 2k_{jX}k_{j}\gamma_{0} \phi_{j}
+\gamma_{0X} k_j^2 \phi_{j}\,, \\
&\quad  \phi_{jX}(0) = 0\,, \quad  \phi_{jXz}(-H) = 0\,,
\end{split}
\end{equation}
with interface conditions at $z=h_0$
\begin{equation}\label{InterfCond_x}
\begin{split}
\phi_{jX+}-\phi_{jX-} &= -h_{0X}(\phi_{jz+}-\phi_{jz-})\,, \\
\gamma_{0+}\phi_{jXz+} - \gamma_{0-}\phi_{jXz-} &= -\left(\gamma_{0X+}\phi_{jz+}- \gamma_{0X-}\phi_{jz-}\right) - \\
&\qquad - h_{0X}\left(\left((\gamma_{0}\phi_{jz})_z\right)_+ -
\left((\gamma_{0}\phi_{jz})_z\right)_-\right)\,.
\end{split}
\end{equation}
We seek a solution to problem (\ref{phi_X}), (\ref{InterfCond_x}) in the form
\begin{equation}\label{anzX_g}
\phi_{jX} = \sum_{l=0}^\infty C_{jl}\phi_l\,,
\end{equation}
where
\begin{equation}\label{Cjl}
C_{jl} = \int_{-H}^0 \gamma_0 \phi_{jX}\phi_{l}\,dz\,.
\end{equation}
Multiplying (\ref{phi_X}) by $\phi_l$ and then integrating resulting equation from
$-H$ to $0$ by parts twice with the use of interface conditions (\ref{InterfCond_x}), we obtain
\begin{equation}\label{Cjl_2}
\begin{split}
&  \left(k_l^2 - k_j^2\right) C_{jl}  = \int_{-H}^0 \gamma_{0X}
\phi_{jz}\phi_{lz} \,dz  -
\int_{-H}^0 \left(\gamma_0 n_0^2\right)_X \phi_j\phi_l \,dz +  2 k_{jX}k_j\delta_{jl} + \\
&  + k_j^2\int_{-H}^0 \gamma_{0X} \phi_j\phi_l \,dz + \left\{
h_{0X}\gamma_0^2\phi_{jz}\phi_{lz}\left[
\left(\frac{1}{\gamma_{0}}\right)_+
-\left(\frac{1}{\gamma_{0}}\right)_-\right]
-\right. \\
& \qquad\qquad - \left.\left.
h_{0X}\phi_j\phi_l\left[\left(\gamma_0\left(k_j^2-n_0^2\right)\right)_+
 - \left(\gamma_0\left(k_j^2-n_0^2 \right)\right)_-\right]
\vphantom{\frac{1}{\gamma_{0}}}
\right\}\right|_{z=h_0} \,, \\
\end{split}
\end{equation}
where $\delta_{jl}$ is the Kronecker delta. The coefficients $C_{jl}$ can be found from this equation when
$l\ne j$ and at $l=j$ we have the formula for $k_{jX}$.
Differentiating normalizing condition (\ref{norm}) we obtain
\begin{equation}\label{c2s1diff_norm_g}
\begin{split}
&\left(\int_{-H}^0\gamma_0\phi_j^2\,dz\right)_X =
\left(\int_{-H}^{h_0}\gamma_0\phi_j^2\,dz +
\int_{h_0}^{0}\gamma_0\phi_j^2\,dz\right)_X  =\\
& = \int_{-H}^0\gamma_{0X}\phi_j^2\,dz +
2\int_{-H}^0\gamma_{0}\phi_{jX}\phi_j\,dz +
h_{0X}\phi_j^2\left.\left[\gamma_{0-}-\gamma_{0+}\right]\right|_{z=h_0}
= 0\,,
\end{split}
\end{equation}
which gives the equation for $C_{jj}$:
\begin{equation}\label{c2s1C_jj_g}
2C_{jj} =  -\int_{-H}^0\gamma_{0X}\phi_j^2\,dz  +
h_{0X}\phi_j^2\left.\left[\gamma_{0+}-\gamma_{0-}\right]\right|_{z=h_0}\,.
\end{equation}

\section{The problem at $O(\epsilon^{1})$} Let
$\{\theta_j|j=M,\dots,N\}$ be a set of phases. We seek a solution to
the Helmholtz equation (\ref{Helm}) in the form
\begin{equation}\label{ansatzR}
P = \sum_{j=M}^N(u_0^{(j)}(X,Y,z) +\epsilon u_1^{(j)}(X,Y,z) +\ldots)\exp\left(\frac{\mathrm{i}}{\epsilon}\theta_j\right)\,.
\end{equation}
At $O(\epsilon^{1})$ we obtain
\begin{equation}\label{E1}
\begin{split}
&\sum_{j=M}^N\left(\left(\gamma_0u^{(j)}_{1z}\right)_z + \gamma_0 n_0^2 u^{(j)}_{1} - \gamma_0 k_j^2 u^{(j)}_{1}\right)\exp\left(\frac{\mathrm{i}}{\epsilon}\theta_j\right) \\
& = \sum_{j=M}^N\left( -\mathrm{i}\gamma_{0X}k_j u^{(j)}_0 -2\mathrm{i}\gamma_{0}k_j u^{(j)}_{0X} -\mathrm{i}\gamma_{0}k_{jX} u^{(j)}_0 + \gamma_{1}k_j^2 u^{(j)}_0 - \gamma_{0} u^{(j)}_{0YY}\right. \\
& \qquad\quad\,\left. - \frac{\partial}{\partial z}\left(\gamma_{1} u^{(j)}_{0z}\right) - n_0^2\gamma_{1}u^{(j)}_0 -\nu\gamma_{0}u^{(j)}_0\right) \exp\left(\frac{\mathrm{i}}{\epsilon}\theta_j\right) \,,
\end{split}
\end{equation}
with the boundary conditions $u^{(j)}_1=0$ at $z=0$, $\partial u^{(j)}_1/\partial z=0$ at $z=-H$,
and the interface conditions at $z=h_0(X,Y)$ for each $j=M,\dots,N$:
\begin{equation}\label{InterfCondE1}
\begin{split}
 u^{(j)}_{1+}-u^{(j)}_{1-} &= h_1(u^{(j)}_{0z-}-u^{(j)}_{0z+})\,, \\
 \gamma_{0+}u^{(j)}_{1z+}-\gamma_{0-}u^{(j)}_{1z-} &= h_1\left(\left( (\gamma_0 u^{(j)}_{0z})_z\right)_-  -
\left( (\gamma_0 u^{(j)}_{0z})_z\right)_+\right) \\
& + \gamma_{1-}u^{(j)}_{0z-}-\gamma_{1+}u^{(j)}_{0z+}
-\mathrm{i}k_j h_{0X}u^{(j)}_0(\gamma_{0-}-\gamma_{0+})
\end{split}
\end{equation}
We seek a solution to problem (\ref{E1}), (\ref{InterfCondE1}) in the form
\begin{equation}\label{anz1}
u_1^{(j)} = \sum_{l=0}^\infty B_{jl}(X,Y)\phi_l(z,X)\,,
\end{equation}
and introduce coefficients $E_{jl}$ at $j\ne l$ by the equality
\begin{equation}\label{c2s1Ejl_2}
A_j E_{jl} = B_{jl} = \int_{-H}^0 \gamma_0 u_1^{(j)}\phi_l\, dz\,.
\end{equation}
Multiplying (\ref{E1}) by $\phi_l$ and then integrating resulting equation from
$-H$ to $0$ by parts twice with the use of interface conditions (\ref{InterfCondE1}), we obtain
\begin{equation}\label{Bjl_1}
\begin{split}
&\sum_{j=M}^N\left(A_j\cdot \vphantom{\left.
\left(\frac{\gamma_1}{\gamma_0}\right)\right|_{z=h_0}}\left\{h_1\phi_l\left[\left((\gamma_0\phi_{jz})_z\right)_+ -
\left((\gamma_0\phi_{jz})_z\right)_- \right] \vphantom{
\left(\frac{\gamma_1}{\gamma_0}\right)}
\right.\right.\\
&\qquad +
\gamma_0\phi_{jz}\phi_l\left[\left(\frac{\gamma_1}{\gamma_0}\right)_+
- \left(\frac{\gamma_1}{\gamma_0}\right)_-\right]
-\mathrm{i}k_jh_{0X}\phi_j\phi_l\left[\gamma_{0+}-\gamma_{0-}\right] \\
&\qquad\left.\left.\left.-
h_1\gamma_0^2\phi_{jz}\phi_{lz}\left[\left(\frac{1}{\gamma_0}\right)_+
-
\left(\frac{1}{\gamma_0}\right)_-\right]\right\}\right|_{z=h_0} + A_j(k_l^2-k_j^2)E_{jl}\right)\exp\left(\frac{\mathrm{i}}{\epsilon}\theta_j\right) \\
& = \sum_{j=M}^N\left(-\mathrm{i}k_j A_j \int_{-H}^0 \gamma_{0X}\phi_j\phi_l\,dz
-2\mathrm{i}k_j A_j \int_{-H}^0 \gamma_{0}\phi_{jX}\phi_l\,dz \right. \\
& \quad -2\mathrm{i}k_j A_{j,X} \int_{-H}^0 \gamma_{0}\phi_{j}\phi_l\,dz
- \mathrm{i}k_{j,X} A_j \int_{-H}^0 \gamma_{0}\phi_{j}\phi_l\,dz \\
& \quad + k_j^2 A_j \int_{-H}^0 \gamma_{1}\phi_{j}\phi_l\,dz
-  A_{j,YY} \int_{-H}^0 \gamma_{0}\phi_{j}\phi_l\,dz \\
& \quad - A_j \int_{-H}^0 \left(\gamma_{1}\phi_{jz}\right)_z\phi_l\,dz
- A_j \int_{-H}^0 \gamma_{1}n_0^2\phi_{j}\phi_l\,dz \\
& \quad \left.- A_j \int_{-H}^0 \nu\gamma_{0}\phi_{j}\phi_l\,dz\right)\exp\left(\frac{\mathrm{i}}{\epsilon}\theta_j\right) \,.
\end{split}
\end{equation}
The terms $A_j(k_l^2-k_j^2)E_{jl}$ in these expressions can be omitted
because of the resonant condition $|k_l-k_j|<\epsilon$.

As
\begin{equation*}
\begin{split}
& -\mathrm{i}k_j A_j \int_{-H}^0 \gamma_{0X}\phi_j\phi_l\,dz
-2\mathrm{i}k_j A_j \int_{-H}^0 \gamma_{0}\phi_{jX}\phi_l\,dz  \\
& = \mathrm{i}k_j A_j\left(C_{lj} - C_{jl}\right)
-\mathrm{i}k_jA_jh_{0X}\phi_j\phi_l\left.\left[\gamma_{0+}-\gamma_{0-}\right]\right|_{z=h_0}\,,
\end{split}
\end{equation*}
and
\begin{equation*}
\begin{split}
- A_j \int_{-H}^0 \left(\gamma_{1}\phi_{jz}\right)_z\phi_l\,dz =
 &A_j \int_{-H}^0 \gamma_{1}\phi_{jz}\phi_{lz}\,dz \\
 + &A_j\gamma_0\phi_{jz}\phi_l\left[\left(\frac{\gamma_1}{\gamma_0}\right)_+ -
 \left(\frac{\gamma_1}{\gamma_0}\right)_-\right]\,,
\end{split}
\end{equation*}
we obtain, after some algebra,
\begin{equation}\label{Bjl_2}
\begin{split}
&\sum_{j=M}^N\left(A_j\vphantom{\left.\left.\left.\left(\frac{1}{\gamma_0}\right)_-\right]\right\}\right|_{z=h_0}  }  \cdot\left\{h_1\phi_l\left[\left((\gamma_0\phi_{jz})_z\right)_+ -
\left((\gamma_0\phi_{jz})_z\right)_- \right] \vphantom{
\left(\frac{\gamma_1}{\gamma_0}\right)_-} -
\right.\right.\\
&\qquad\left.\left.\left.-
h_1\gamma_0^2\phi_{jz}\phi_{lz}\left[\left(\frac{1}{\gamma_0}\right)_+
-
\left(\frac{1}{\gamma_0}\right)_-\right]\right\}\right|_{z=h_0} \right)\exp\left(\frac{\mathrm{i}}{\epsilon}\theta_j\right) \\
& = \sum_{j=M}^N\left( \vphantom{\int_{-H}^0 \nu\gamma_{0}\phi_{j}\phi_l\,dz}\mathrm{i}k_j A_j\left(C_{lj} - C_{jl}\right)
 -2\mathrm{i}k_j A_{j,X} \delta_{jl} - \mathrm{i}k_{jX} A_j\delta_{jl} -  A_{j,YY} \delta_{jl} \right.\\
& \quad + k_j^2 A_j \int_{-H}^0 \gamma_{1}\phi_{j}\phi_l\,dz
 \quad + A_j \int_{-H}^0 \gamma_{1}\phi_{jz}\phi_{lz}\,dz
- A_j \int_{-H}^0 \gamma_{1}n_0^2\phi_{j}\phi_l\,dz  \\
& \left.\quad - A_j \int_{-H}^0 \nu\gamma_{0}\phi_{j}\phi_l\,dz \right)\exp\left(\frac{\mathrm{i}}{\epsilon}\theta_j\right)\,.
\end{split}
\end{equation}

\begin{prop}\label{pr_MPE}
The solvability condition for the problem at $O(\epsilon^1)$  is a system of parabolic wave equtions for $l=M,\ldots,N$
\begin{equation}\label{MPE}
2\mathrm{i}k_l A_{l,X} + \mathrm{i}k_{l,X} A + A_{l,YY} + \alpha_{ll} A_l + \sum_{j=M,j\ne l}^N\alpha_{lj}A_j\exp(\theta_{lj})= 0\,,
\end{equation}
where $\alpha_{lj}$ is given by the following formula
\begin{equation}\label{alpha}
\begin{split}
\alpha_{lj} & = \int_{-\infty}^0 \gamma_0\nu \phi_j\phi_l\,dz +
\int_{-\infty}^0 \gamma_1\left(n_0^2-k_j^2\right) \phi_j\phi_l\,dz -
\int_{-\infty}^0 \gamma_1\phi_{jz}\phi_{lz}\,dz  \\
& -\mathrm{i}k_j\left(C_{lj} - C_{jl}\right)\\
& + \left\{h_1\phi_l\left[\left((\gamma_0\phi_{jz})_z\right)_+ - \left((\gamma_0\phi_{jz})_z\right)_- \right]
\vphantom{ \left(\frac{\gamma_1}{\gamma_0}\right)_-}
\right.\\
& \qquad\qquad\qquad\left.\left.- h_1\gamma_0^2\phi_{jz}\phi_{lz}\left[\left(\frac{1}{\gamma_0}\right)_+ -
\left(\frac{1}{\gamma_0}\right)_-\right]\right\}\right|_{z=h_0}
\,,
\end{split}
\end{equation}
and
\begin{equation}\label{detun}
\theta_{lj}=\mathrm{i}(\theta_{l}-\theta_{j})
\end{equation}
\end{prop}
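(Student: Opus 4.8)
The plan is to read (\ref{MPE}) off as the Fredholm solvability condition for the inhomogeneous boundary value problem (\ref{E1})--(\ref{InterfCondE1}), using the fact that its principal part is exactly the self-adjoint operator of the spectral problem (\ref{Spectral}). For each $j$ the left-hand side of (\ref{E1}) is $L_j u_1^{(j)}$ with $L_j v=(\gamma_0 v_z)_z+\gamma_0 n_0^2 v-\gamma_0 k_j^2 v$, whose kernel in $L_{2,\gamma_0}[-H,0]$ is spanned by $\phi_j$. First I would take the scalar product (\ref{L2scalar}) of (\ref{E1}) with each $\phi_l$ and integrate by parts twice: the endpoint terms at $z=0,-H$ vanish by the boundary conditions, whereas the interface conditions (\ref{InterfCondE1}), which differ from the homogeneous jump conditions obeyed by $\phi_l$, produce the boundary contributions collected at $z=h_0$. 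By self-adjointness the interior part of the $j$-th summand collapses to $A_j(k_l^2-k_j^2)E_{jl}$, the only place where the unknown correction coefficients appear; this is precisely (\ref{Bjl_1}).

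Next I would invoke the resonance hypothesis $|k_l-k_j|<\epsilon$, which makes each $k_l^2-k_j^2$ of order $\epsilon$ and lets me discard the terms $A_j(k_l^2-k_j^2)E_{jl}$ at the working order. This is the decisive step that turns a solvability condition for a single correction $u_1$ into a closed relation among the amplitudes $A_j$. The remaining right-hand side is then reduced by the two identities already prepared in the excerpt: the pair $\int\gamma_{0X}\phi_j\phi_l\,dz$ and $\int\gamma_0\phi_{jX}\phi_l\,dz$ is rewritten, through definition (\ref{Cjl}) together with the $X$-derivative of the orthonormality relation in the manner of (\ref{c2s1diff_norm_g}), as $\mathrm{i}k_j A_j(C_{lj}-C_{jl})$ plus an $h_{0X}[\gamma_{0+}-\gamma_{0-}]$ jump; and $-\int(\gamma_1\phi_{jz})_z\phi_l\,dz$ is integrated by parts into $\int\gamma_1\phi_{jz}\phi_{lz}\,dz$ plus a $(\gamma_1/\gamma_0)$ jump. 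After substitution the $h_{0X}[\gamma_{0+}-\gamma_{0-}]$ and the $(\gamma_1/\gamma_0)$ interface terms cancel against their matches on the left of (\ref{Bjl_1}), leaving exactly (\ref{Bjl_2}).

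It then remains to display the system. Multiplying (\ref{Bjl_2}) by $\exp(-\mathrm{i}\theta_l/\epsilon)$, the derivative terms $A_{l,X}$, $A_{l,YY}$ and $k_{l,X}A_l$ enter only through $\delta_{jl}$, so they accompany $A_l$ alone; the remaining $j=l$ algebraic terms assemble into $\alpha_{ll}A_l$, and each $j\ne l$ term contributes a slowly varying phase difference which, under the parabolic scaling with $k_j-k_l=O(\epsilon)$, is of order one and is recorded as $\exp(\theta_{lj})$ via (\ref{detun}). Matching the surviving integrals, the combination $-\mathrm{i}k_j(C_{lj}-C_{jl})$, and the residual $h_1$ interface terms against (\ref{alpha}) identifies $\alpha_{lj}$ coefficient by coefficient, and (\ref{MPE}) follows; the lower integration limit appears as $-\infty$ in (\ref{alpha}) under the usual half-space idealization with exponentially decaying eigenfunctions.

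I expect the principal obstacle to be conceptual rather than computational, namely the spectral degeneracy. The classical Fredholm alternative projects onto a one-dimensional kernel, while here the mutual near-coincidence of $k_M,\dots,k_N$ means that none of the resonant coefficients $E_{jl}$ can be solved for, and it is exactly the controlled omission of these $O(\epsilon)$ terms that yields a coupled system rather than a single equation. The accompanying delicate point is verifying that the interface jump terms cancel precisely between the two sides of (\ref{Bjl_1}); without that cancellation $\alpha_{lj}$ would retain $\gamma_0$-jump contributions and would not reduce to the closed form (\ref{alpha}).
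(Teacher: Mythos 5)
Your proposal reproduces the paper's own derivation: projecting (\ref{E1}) onto each $\phi_l$ in $L_{2,\gamma_0}[-H,0]$ with double integration by parts to obtain (\ref{Bjl_1}), dropping the $A_j(k_l^2-k_j^2)E_{jl}$ terms by the resonance condition $|k_l-k_j|<\epsilon$, and using the two stated identities to cancel the $h_{0X}[\gamma_{0+}-\gamma_{0-}]$ and $(\gamma_1/\gamma_0)$ jump terms en route to (\ref{Bjl_2}) and then (\ref{MPE})--(\ref{alpha}). This is essentially the same argument as in the paper, correctly executed.
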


Using spectral problem~(\ref{Spectral}) the interface terms in (\ref{alpha}) can be rewritten also as
\begin{equation*}
\begin{split}
&  \left\{h_1\phi_j\phi_l\left[k_j^2\left(\gamma_{0+}-\gamma_{0-}\right)
- \left(n_0^2\gamma_0\right)_+ + \left(n_0^2\gamma_0\right)_-  \right]
\vphantom{ \left(\frac{\gamma_1}{\gamma_0}\right)_-}
\right.\\
&\qquad\qquad\qquad \left.\left.- h_1\gamma_0^2\phi_{jz}\phi_{lz}\left[\left(\frac{1}{\gamma_0}\right)_+ -
\left(\frac{1}{\gamma_0}\right)_-\right]\right\}\right|_{z=h_0}
\,.
\end{split}
\end{equation*}
We shall refer to the quantities and variables $X$, $Y$, $\nu$,
$\theta_j$, and $A_j$ as the {\em asymptotic} ones. Considerations
of initial-boundary value problems in a (partially) bounded domain
require the use of {\em physical} quantities and variables, which
will be $x$, $y$, $\bar\nu=\epsilon\nu$, $\bar A_j(x,y)=A_j(\epsilon
x,\sqrt{\epsilon}y)=A_j(X,Y)$, and
$\displaystyle{\bar\theta_j=\int^X\frac{1}{\epsilon}\theta_{j,X}\,dX=\int^x
k_j\,dx}$.
 It can be easily verified that equations~(\ref{MPE}) in physical variables has the same form
\begin{equation}\label{MPEph}
2\mathrm{i}k_l \bar A_{l,x} + \mathrm{i}k_{l,x} \bar A_j + \bar A_{l,yy} +\bar \alpha_{ll} \bar A_l + \sum_{j=M,j\ne l}^N\bar \alpha_{lj} \bar A_j\exp(\bar\theta_{lj}) = 0\,,
\end{equation}
where $\bar \alpha_{lj}$ are expressed by the same formulas as $\alpha_{lj}$ with $\nu$ replaced by $\bar\nu$,
$\bar\theta_{lj}=(\bar\theta_{l}-\bar\theta_{j})$.

\section{Initial-boundary value problems for mode parabolic
equation} For eq.~\ref{MPEph} we shall consider the initial-boundary
value problem in domain of the form $\{(x,y)|0\le x <\infty\,,Y_1\le
y \le Y_2\}$ with the initial condition
\begin{equation}\label{in_cond}
\bar A_j(0,y)=g_j(y)=(g,\phi_j)=\int_{-H}^{\,0} \gamma_0 g(z,y)\phi_j(z)\,dz\,,
\end{equation}
interface conditions~(\ref{intVPar}) and transparent boundary conditions at $y=Y_1$ and $y=Y_2$.

\subsection{Vertical interfaces and boundaries and corresponding interface and boundary conditions}
We consider vertical interfaces along smooth curves of the form $\{(x,y)|y={\cal{I}}(x)\}$.
Such an interface is formed mostly by the jump of topography $h_1$ at $y={\cal{I}}(x)$.
The usual interface conditions for eq.~(\ref{Helm}) are
\begin{equation} \label{intV}
\begin{split}
P|_{y={\cal{I}}(x)+0}=P|_{y={\cal{I}}(x)-0}\,, \quad
\left.\gamma\frac{\partial P}{\partial n}\right|_{y={\cal{I}}(x)+0}
= \left.\gamma\frac{\partial P}{\partial n}\right|_{y={\cal{I}}(x)+0}\,,
\end{split}
\end{equation}
where $\partial/\partial n$ denotes the normal derivative.
Assuming that $P=\exp(\bar\theta_j)\bar A_j\phi_j$,
we have
\begin{equation} \label{intVP}
\begin{split}
\gamma\frac{\partial P}{\partial n} = \gamma P_y-\gamma{\cal{I}}_x P_x =
\gamma_0\exp(\bar\theta)\left[A_{j,y}- \I{\cal{I}}_x k_j A_j \right]\phi_j+ O(\epsilon)\,.
\end{split}
\end{equation}
Multiplying eq.~(\ref{intVP}) by $\phi_j$ and integrating with respect to $z$, we have
\begin{equation*}
\begin{split}
\int_{-H}^{\,0}\gamma\frac{\partial P}{\partial n}\,dz
=\exp(\bar\theta)\left[A_{j,y}- \I{\cal{I}}_x k_j A_j \right] + O(\epsilon) \,,
\end{split}
\end{equation*}
The interface conditions at $y={\cal{I}}(x)$ modulo $\epsilon$ now become
\begin{equation} \label{intVPar}
\begin{split}
A_{j\,L}=A_{j\,R}\,,\\
\left(A_{j,y}- \I{\cal{I}}_x k_j A_j\right)_L=\left(A_{j,y}- \I{\cal{I}}_x k_j A_j\right)_R\,,
\end{split}
\end{equation}
where we use the denotations $f(x_0,y_0)_R=\lim_{y\downarrow y_0}f(x_0,y)$ and
$f(x_0,y_0)_L=\lim_{y\uparrow y_0}f(x_0,y)$, $(x_0,y_0)$ is the interface point.
As $k_j$ is assumed to be continuous through the interface, then finally the interface conditions take the form
\begin{equation} \label{intVParF}
\begin{split}
A_{j\,L}=A_{j\,R}\,,\\
\left(A_{j,y}\right)_L=\left(A_{j,y}\right)_R\,.
\end{split}
\end{equation}
The analogous considerations give the following boundary conditions at the boundary $y={\cal{B}}(x)$:
\[
A_j|_{y={\cal{B}}(x)}=0
\]
at the soft boundary and
\begin{equation} \label{boundPar}
\left.\left(A_{j,y}- \I{\cal{B}}_x k_j A_j\right)\right|_{y={\cal{B}}(x)}=0
\end{equation}
at the rigid boundary.

\section{Energy flux conservation for parabolic
equations~(\ref{MPEph})} The time averaged acoustic energy flux
through the plane $x=x_0$ is defined as
\[
J(x_0)=\frac{1}{2\omega}\im\int_{-\infty}^\infty I(x_0,y)\,dy\,,
\]
where
\[
I(x_0,y)=\int_{-H}^{\,0} \gamma P_x(x_0,y,z)P^*(x_0,y,z)\,dz\,,
\]
$P^*$ is the complex conjugate of $P$.
\par
As is well known, if $P$ is a solution of the Helmholtz equation~(\ref{Helm}) then the corresponding energy flux
is conserved, that is
\[
\frac{dJ}{dx}=0\,.
\]
\begin{prop}\label{pr_flux}
Assume that $\im\bar\nu=0$. Let $\{\bar A_j|j=M,\ldots N\}$ be a solution to equations~(\ref{MPEph}) with interface
conditions~(\ref{intVParF}) and boundary condition~(\ref{boundPar}) .
Then for
$\displaystyle{ P = \sum_{j=M}^N\bar A_j\exp(\I\bar\theta)\phi_j= \sum_{j=M}^N A_j\exp\left(\frac{\I}{\epsilon}\theta\right)}\phi_j$
\[
\frac{dJ}{dx}=O(\epsilon^2)
\]
\end{prop}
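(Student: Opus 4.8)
The plan is to relate $dJ/dx$ to the extent to which the modal ansatz $P$ fails to solve the Helmholtz equation, rather than to expand $I$ mode by mode. Writing $\mathcal{L}P$ for the left-hand side of (\ref{Helm}), I would apply Green's second identity to the pair $P,P^*$ over a thin slab $x_0\le x\le x_0+\Delta x$ of $\Omega$. The assumption $\im\bar\nu=0$ makes $\kappa^2$ real, so the $\gamma\kappa^2$ terms drop out of the antisymmetric combination and the volume term becomes $\int(P^*\mathcal{L}P-P\mathcal{L}P^*)\,dV=2\I\int\im(P^*\mathcal{L}P)\,dV$, while the surface term is the net co-normal flux through $\partial S$, whose two $x$-faces reproduce $2\omega\,J$. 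Letting $\Delta x\to0$ I would obtain
\[
\frac{dJ}{dx}=\frac{1}{2\omega}\int_{-\infty}^{\infty}\int_{-H}^{0}\im\!\left(P^*\mathcal{L}P\right)dz\,dy+(\text{boundary and interface terms}).
\]

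I would then check that every boundary and interface contribution is negligible. On $z=0$ we have $P=0$ by (\ref{Dir}); on $z=-H$, $P_z=0$ by the rigid condition; at the lateral ends the current vanishes by decay as $y\to\pm\infty$, by $P=0$ at a soft wall, or by (\ref{boundPar}) at a rigid wall, and it is continuous across a vertical interface by (\ref{intVParF}). The tilted internal interface $z=h_0$ is handled automatically by the slab identity: the contributions from its two sides cancel through continuity of $P$ and of the co-normal flux $\gamma(P_z-h_xP_x-h_yP_y)$, and since $P$ meets the interface conditions (\ref{InterfCond}) in the reduced form (\ref{InterfCondh01})--(\ref{InterfCondh02}) only up to $O(\epsilon^2)$, the residual mismatch there is $O(\epsilon^2)$ per unit length in $x$. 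Hence
\[
\frac{dJ}{dx}=\frac{1}{2\omega}\int_{-\infty}^{\infty}\int_{-H}^{0}\im\!\left(P^*\mathcal{L}P\right)dz\,dy+O(\epsilon^2).
\]

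Finally I would estimate the residual. Since $P=\sum_{j=M}^{N}A_j\phi_j\exp(\I\theta_j/\epsilon)$ is the leading modal field, the problems at orders $\epsilon^{-2},\epsilon^{-1},\epsilon^{0}$ hold and $\mathcal{L}P=O(\epsilon)$, its $O(\epsilon)$ part being the inhomogeneous terms of (\ref{E1}). Because $P^*$ lies in the span of the resonant eigenfunctions $\phi_l$, the double integral collapses onto the modal projections $\int_{-H}^{0}\phi_l\,\mathcal{L}P\,dz$, and the integrations by parts that produced (\ref{Bjl_1})--(\ref{Bjl_2}) show that the $O(\epsilon)$ part of each projection is exactly the left-hand side of the mode parabolic equation (\ref{MPE}). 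As $\{A_j\}$ solve (\ref{MPE}) this part vanishes and the projection is $O(\epsilon^2)$; summing over $l=M,\dots,N$ with the slowly varying phases $\exp(\I(\theta_j-\theta_l)/\epsilon)$, admissible since $|k_l-k_j|<\epsilon$, then yields $dJ/dx=O(\epsilon^2)$.

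The step I expect to be the main obstacle is precisely this identification: I must verify that dropping $u_1$ from $P$ costs only $O(\epsilon^2)$ in the projection, i.e. that the coupling to non-resonant modes (the $E_{jl}$ contributions) is annihilated because $P^*$ carries no non-resonant component, and that the $O(\epsilon^2)$ interface mismatch remains uniform in the slab limit. A more computational alternative would expand $\im I$ directly: its $O(1)$ part is $\sum_l k_l|\bar A_l|^2$ together with cross terms $\propto\epsilon C_{jl}$ that are themselves $O(1)$, because $C_{jl}=O(\epsilon^{-1})$ near resonance by (\ref{Cjl_2}); one would then show, using (\ref{MPE}) and the same relation (\ref{Cjl_2}), that the non-Hermitian part of $(\bar\alpha_{lj})$ exactly cancels the $x$-derivative of these cross terms. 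That cancellation is the detailed bookkeeping the residual argument is designed to bypass.
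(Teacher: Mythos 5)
Your route --- Green's second identity on a thin slab plus a residual estimate for $\mathcal{L}P$ --- is genuinely different from the paper's. The paper expands $\im I$ directly (its $O(1)$ part $\sum_l k_l|A_l|^2$ and its $O(\epsilon)$ part containing the $C_{lj}$ cross terms coming from $\phi_{jX}$), and then shows that the antisymmetrized combination of eq.~(\ref{MPE}) with $A_l^*$, minus its conjugate, reproduces exactly the $O(\epsilon)$ part of $2\omega\,dJ/dx$, the key fact being that the only non-Hermitian part of $\alpha_{lj}$ is $-\I k_j(C_{lj}-C_{jl})$. Your scheme could in principle be made rigorous and would be conceptually cleaner, since it bundles the $C_{lj}$ bookkeeping into the single statement ``the $A_j$ solve (\ref{MPE})''; but as written it has a genuine gap at the internal interface $z=h_0$, and that gap propagates into your treatment of the volume term.

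Concretely: the field $P$ in the proposition omits $u_1$, so it satisfies continuity of the co-normal flux $\gamma\left(P_z-h_xP_x-h_yP_y\right)$ only up to $O(\epsilon)$, not $O(\epsilon^2)$ --- the $O(\epsilon)$ part of (\ref{InterfCondh02}) is precisely the jump condition that $u_1$ is introduced to absorb, via (\ref{InterfCondE1}). Hence the interface contribution to your slab identity is $O(\epsilon)$, not $O(\epsilon^2)$ as you assert. Correspondingly, the projection $\int_{-H}^{0}\phi_l\,\mathcal{L}P\,dz$ of the volume residual is \emph{not} the full left-hand side of (\ref{MPE}): the $h_1$- and $\gamma$-jump block in (\ref{alpha}) arises in the paper from integrating by parts against the jump conditions on $u_1$, so it is absent from the volume projection and lives instead in the surface term you discarded. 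The two errors are complementary --- the $O(\epsilon)$ surface mismatch supplies exactly the missing interface part of $\alpha_{lj}$, after which (\ref{MPE}) kills the whole $O(\epsilon)$ contribution --- but you must exhibit this cancellation, or equivalently prove that the interface block of $\alpha_{lj}$ is real and symmetric under $j\leftrightarrow l$ up to the resonance tolerance $|k_l-k_j|<\epsilon$, hence Hermitian and harmless in the antisymmetrized combination. You explicitly defer this as ``the detailed bookkeeping the residual argument is designed to bypass,'' but it cannot be bypassed: that bookkeeping is the entire content of the proposition, and the paper's proof consists of carrying it out.
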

\begin{proof}
First calculate the derivative of the flux with respect to $x$
for the anzats used:
\begin{equation}\label{eFlux}
\begin{split}
& 2\omega\frac{dJ}{dx} 
= \frac{d}{dx}\int_{-\infty}^\infty\left[\sum_{l=M}^N
k_l|A_l|^2\right.\\
& +\epsilon\sum_{l=M}^N\sum_{j=M}^N\im\left(C_{lj}A_lA_j^*\exp\left(\frac{\I}{\epsilon}(\theta_l-\theta_j)\right)\right)
 \left.+\epsilon\sum_{l=M}^N \im(A_{l,X}A_l^*)\right]\,dy\\
& =\int_{-\infty}^\infty
\left[\epsilon\sum_{l=M}^N\sum_{j=M}^N(k_l-k_j)C_{lj}\re\left(A_jA_l^*\exp\left(\frac{\I}{\epsilon}(\theta_j-\theta_l)\right)\right)\right.\\
& \left.\qquad +\epsilon\sum_{l=M}^N (k_l|A_l|^2)_X\right] \,dy
+O(\epsilon^2)\,.
\end{split}
\end{equation}
Consider now the sum on $l$ of the equations (\ref{MPE}) multiplied by $A_l^*$ subtracted
conjugate equation multiplied by $A_l$ and integrate the result on $y$ from minus infinity to infinity:
\begin{equation*}
\begin{split}
& \sum_{l=M}^N\int_{-\infty}^\infty\left[\vphantom{\sum_{l=M}^N}\left(2\mathrm{i}k_l A_{l,X} + \mathrm{i}k_{l,X} A_l +
A_{l,YY}+\sum_{j=M}^N\alpha_{lj}A_j\exp(\theta_{lj})\vphantom{\frac{1}{R}}\right)A_l^*\right. - \\
& \left((-2\mathrm{i}k_l A^*_{l,R} - \mathrm{i}k_{l,R}A^*_l 
+A^*_{l,YY}
+\sum_{j=M}^N\left.\alpha^*_{lj}A^*_j\exp(\theta^*_{lj})\vphantom{\frac{1}{R}}\right)A_l\right]\,dy= 0\,.
\end{split}
\end{equation*}
Further the terms with the second derivative with respect to $Y$ vanish due to boundary conditions.
After some transformation we have:
\begin{equation*}
\begin{split}
&\sum_{l=M}^N\sum_{j=M}^N\int_{-\infty}^\infty(\alpha_{lj}A_j\exp(\theta_{lj})A_l^*
 - \alpha_{lj}^*A_j^*\exp(\theta_{lj}^*)A_l)\,dy\\
&\qquad\qquad+\sum_{l=M}^N 2\mathrm{i}\int_{-\infty}^\infty(k_l|A_{l}|^2 )_X\,dy=0\,,
\end{split}
\end{equation*}
then substitute for $\alpha_{lj}$ its expression (\ref{alpha})
\begin{equation*}
\begin{split}
&\sum_{l=M}^N\sum_{j=M}^N\int_{-\infty}^\infty
\left(-\mathrm{i}k_j(C_{lj}-C_{jl})A_j\exp(\frac{\I}{\epsilon}(\theta_j-\theta_l))A_l^* + \right. \\
&\left. -\mathrm{i}k_j(C_{lj}-C_{jl}) A_j^* \exp(\frac{\I}{\epsilon}(\theta_l-\theta_j)) A_l \right)\,dy
+\sum_{l=M}^N 2\mathrm{i}\int_{-\infty}^\infty(k_l|A_{l}|^2 )_X\,dy =0\,,
\end{split}
\end{equation*}
and collect terms
\begin{equation*}
\begin{split}
&\sum_{l=M}^N\sum_{j=M}^N\int_{-\infty}^\infty\left(
-\mathrm{i}k_j(C_{lj}-C_{jl})2\re(A_j\exp(\frac{\I}{\epsilon}(\theta_j-\theta_l))A_l^*)\right)\,dy\\
&\qquad\qquad+\sum_{l=M}^N 2\mathrm{i}\int_{-\infty}^\infty(k_l |A_{l}|^2 )_X\,dy = 0\,,
\end{split}
\end{equation*}
write double sums separately for terms with $C_{lj}$ and $C_{jl}$
\begin{equation*}
\begin{split}
&\sum_{l=M}^N\sum_{j=M}^N\int_{-\infty}^\infty\left(
-\mathrm{i}k_jC_{lj}2\re(A_j\exp(\frac{\I}{\epsilon}(\theta_j-\theta_l))A_l^*)\right)\,dy\\
&\qquad+\sum_{l=M}^N\sum_{j=M}^N\int_{-\infty}^\infty\left(
\mathrm{i}k_jC_{jl}2\re(A_j\exp(\frac{\I}{\epsilon}(\theta_j-\theta_l))A_l^*)\right)\,dy\\
&\qquad\qquad\qquad+\sum_{l=M}^N 2\mathrm{i}\int_{-\infty}^\infty(k_l|A_{l}|^2 )_X\,dy = 0\,,
\end{split}
\end{equation*}
exchange indexes $l$ and $j$ in the second double sum and finally get
\begin{equation*}
\begin{split}
&\sum_{l=M}^N\sum_{j=M}^N \int_{-\infty}^\infty\left(
\mathrm{i}(k_l-k_j)C_{lj}2\re(A_j\exp(\frac{\I}{\epsilon}(\theta_j-\theta_l))A_l^*)\right)\,dy\\
&\qquad\qquad\qquad+\sum_{l=M}^N 2\mathrm{i}\int_{-\infty}^\infty(k_l |A_{l}|^2 )_X\,dy = 0\,,
\end{split}
\end{equation*}
The last equation coincides modulo $2\mathrm{i}$ with the $O(\epsilon)$-part of (\ref{eFlux}).
\end{proof}

\begin{figure}
\begin{center}
\includegraphics[width=\textwidth]{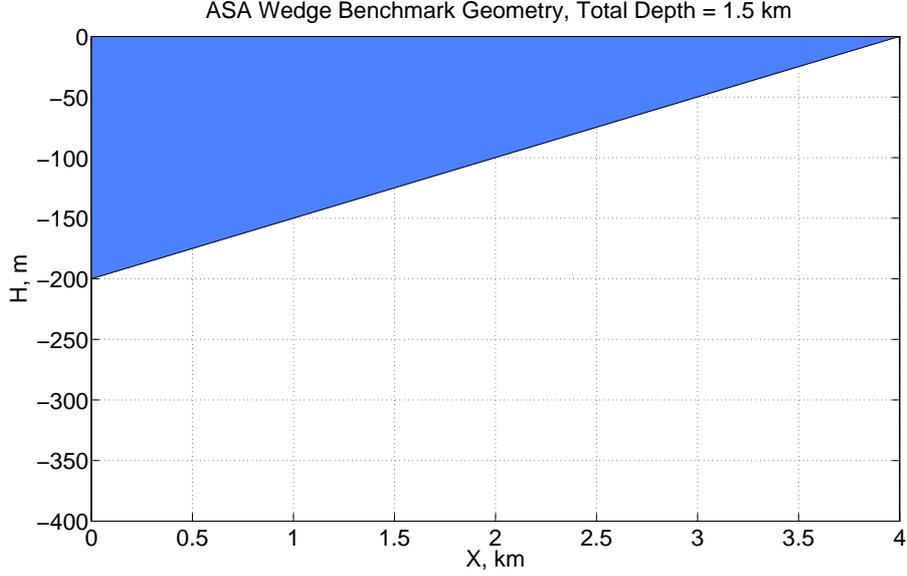}
\caption{ASA wedge benchmark geometry. Harmonic point source with
frequency 25 Hz is located at depth 100 m. Sound speed in the water
layer is 1.5 km/s, in the bottom is 1.7 km/s. Density of the water
is 1000 kg/m$^3$, of the bottom is 1500 kg/m$^3$. Attenuation in the
water is absent, in the bottom is 0.5 dB/$\lambda$ till the depth of
1 km, then attenuation linearly increases up to 2.5 dB/$\lambda$ at
depth 1.5 km.}
\end{center}
\end{figure}

\begin{figure}
\begin{center}
\includegraphics[width=\textwidth]{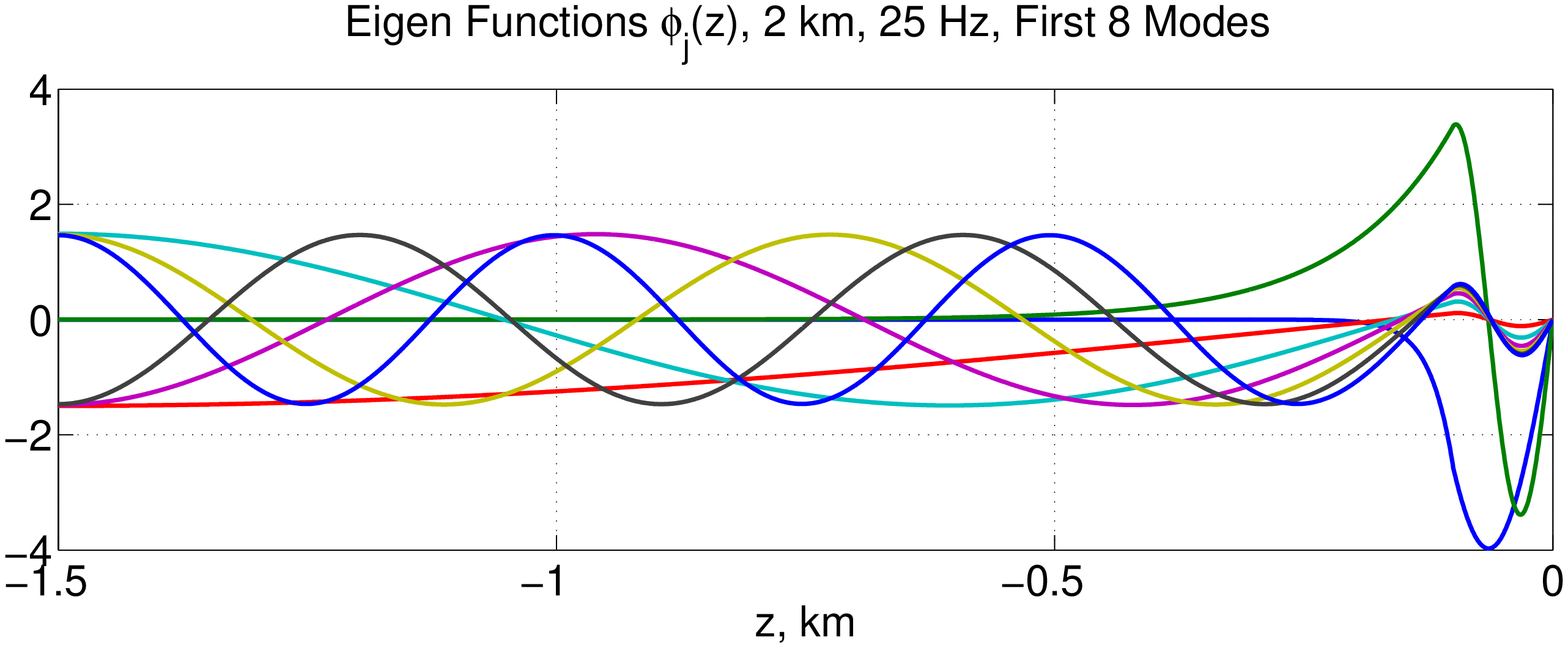}
\includegraphics[width=\textwidth]{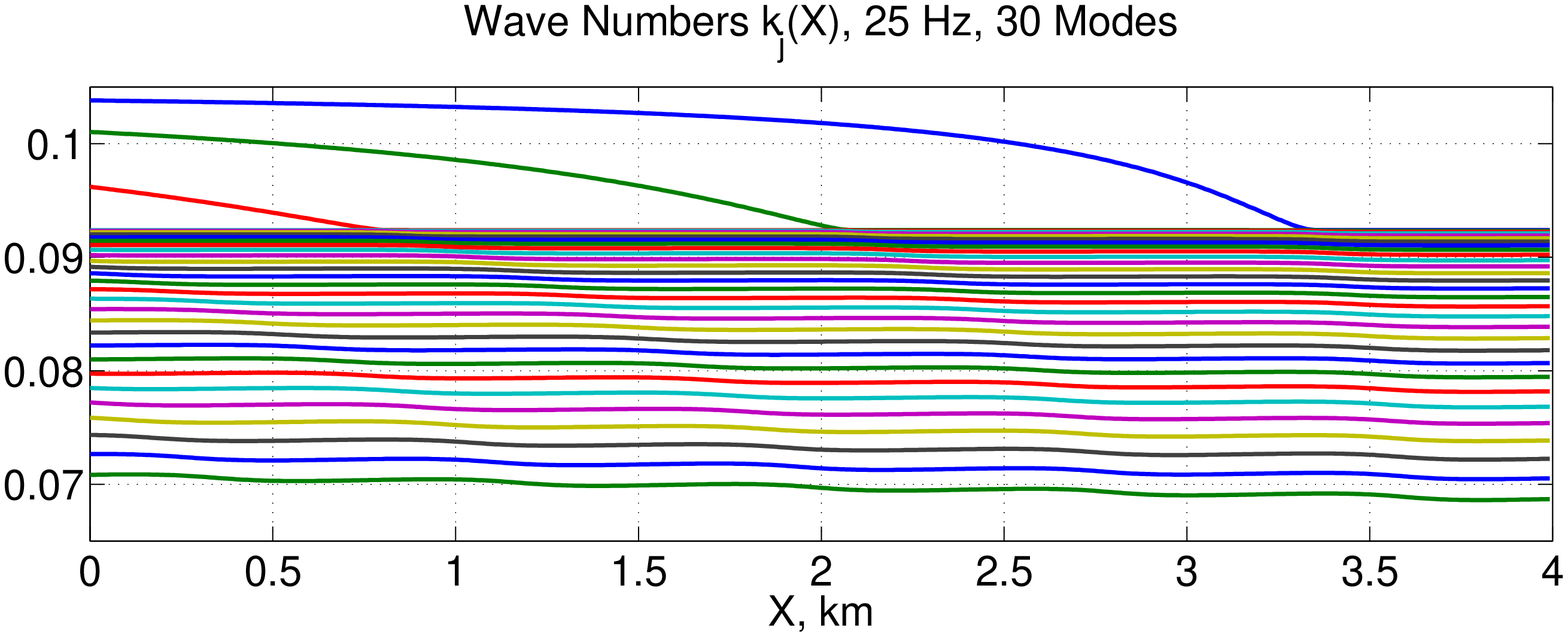}
\caption{Mode waveforms  at distance X=2 km and wavenumbers $k_j(X)$
for the ASA wedge.}
\end{center}
\end{figure}

\begin{figure}
\begin{center}
\includegraphics[width=\textwidth]{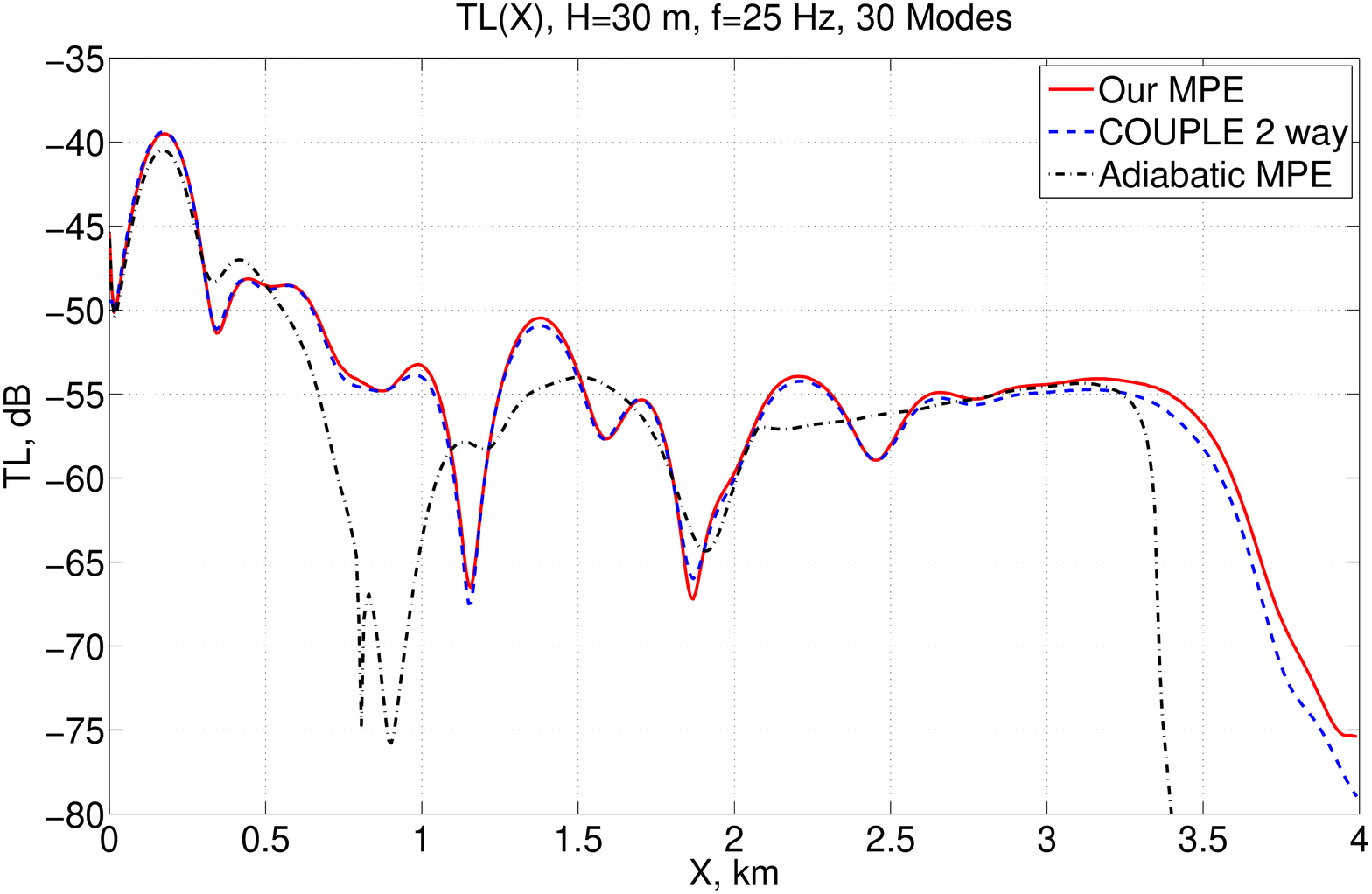}
\caption{Transmission loss for the ASA wedge, receiver depth=30 m.
Comparison with COUPLE program (Meansqu. error=0.9 dB) and Adiabatic
MPE.}
\end{center}
\end{figure}

\begin{figure}
\begin{center}
\includegraphics[width=\textwidth]{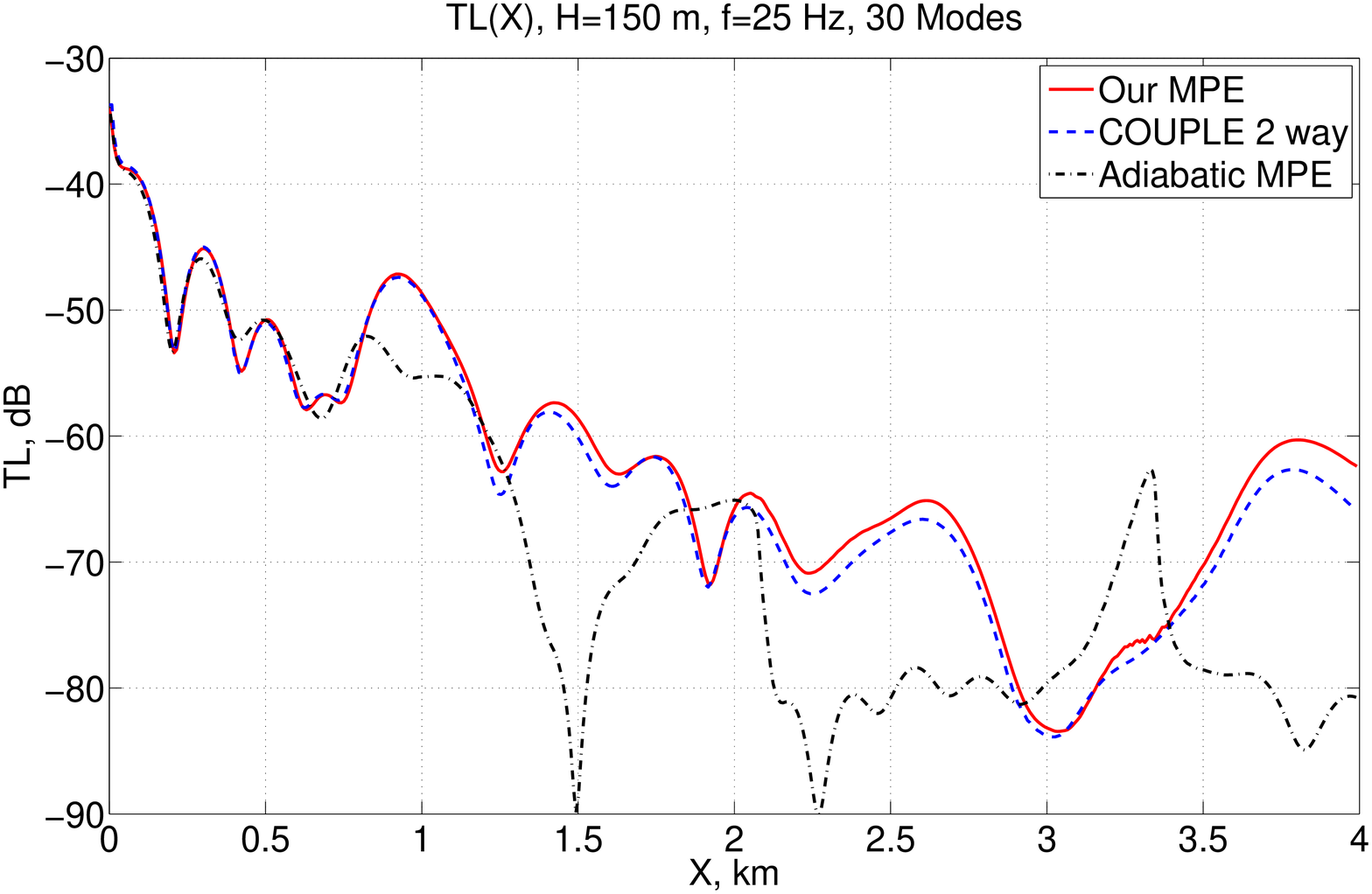}
\caption{Transmission loss for the ASA wedge, receiver depth=150 m.
Comparison with COUPLE program (Meansqu. error=1.3 dB) and Adiabatic
MPE.}
\end{center}
\end{figure}

\begin{figure}
\begin{center}
\includegraphics[width=\textwidth]{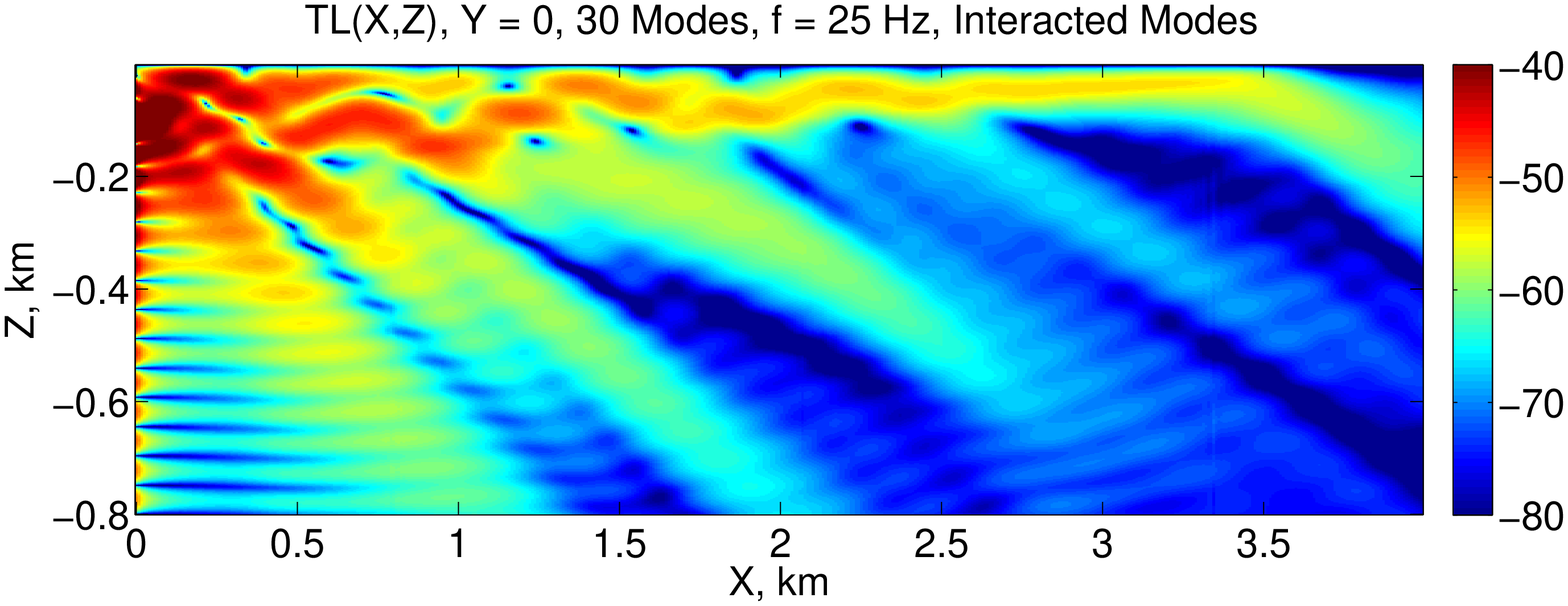}
\includegraphics[width=\textwidth]{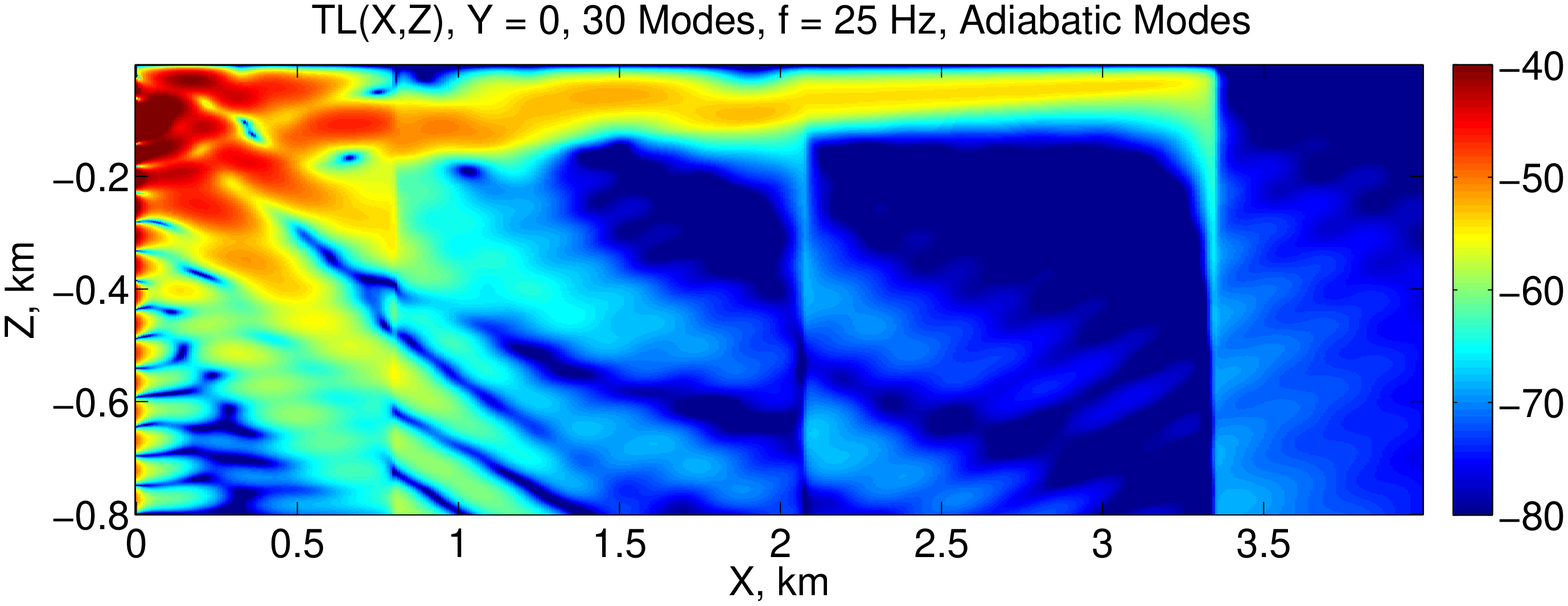}
\caption{Transmission loss, XZ plain. Top: interacted modes, bottom:
adiabatic modes}
\end{center}
\end{figure}

\begin{figure}
\begin{center}
\includegraphics[width=\textwidth]{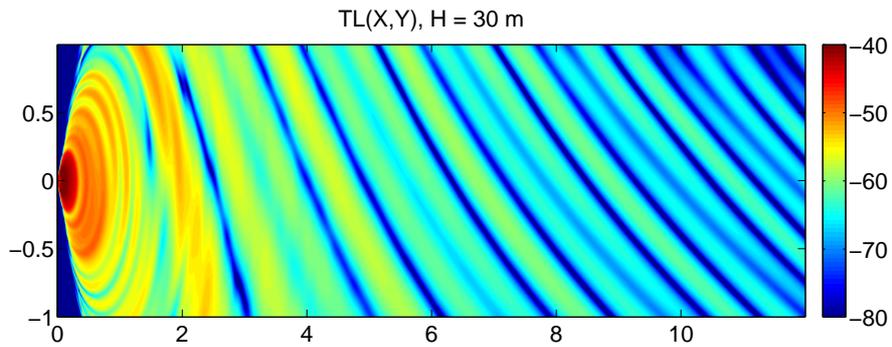}
\caption{Transmission loss for the ASA wedge, plain XY, receiver
depth=30 m. Across slope propagation. Adiabatic modes. Total depth
is 600 m, wedge bottom depth is 200 m, Attenuation in the bottom is
0.5 dB/$\lambda$ till the depth of 500~m, then attenuation linearly
increases up to 5.5 dB/$\lambda$ at depth 600~m. The other
parameters are the same as in classical ASA wedge benchmark.}
\end{center}
\end{figure}

\begin{figure}
\begin{center}
\includegraphics[width=\textwidth]{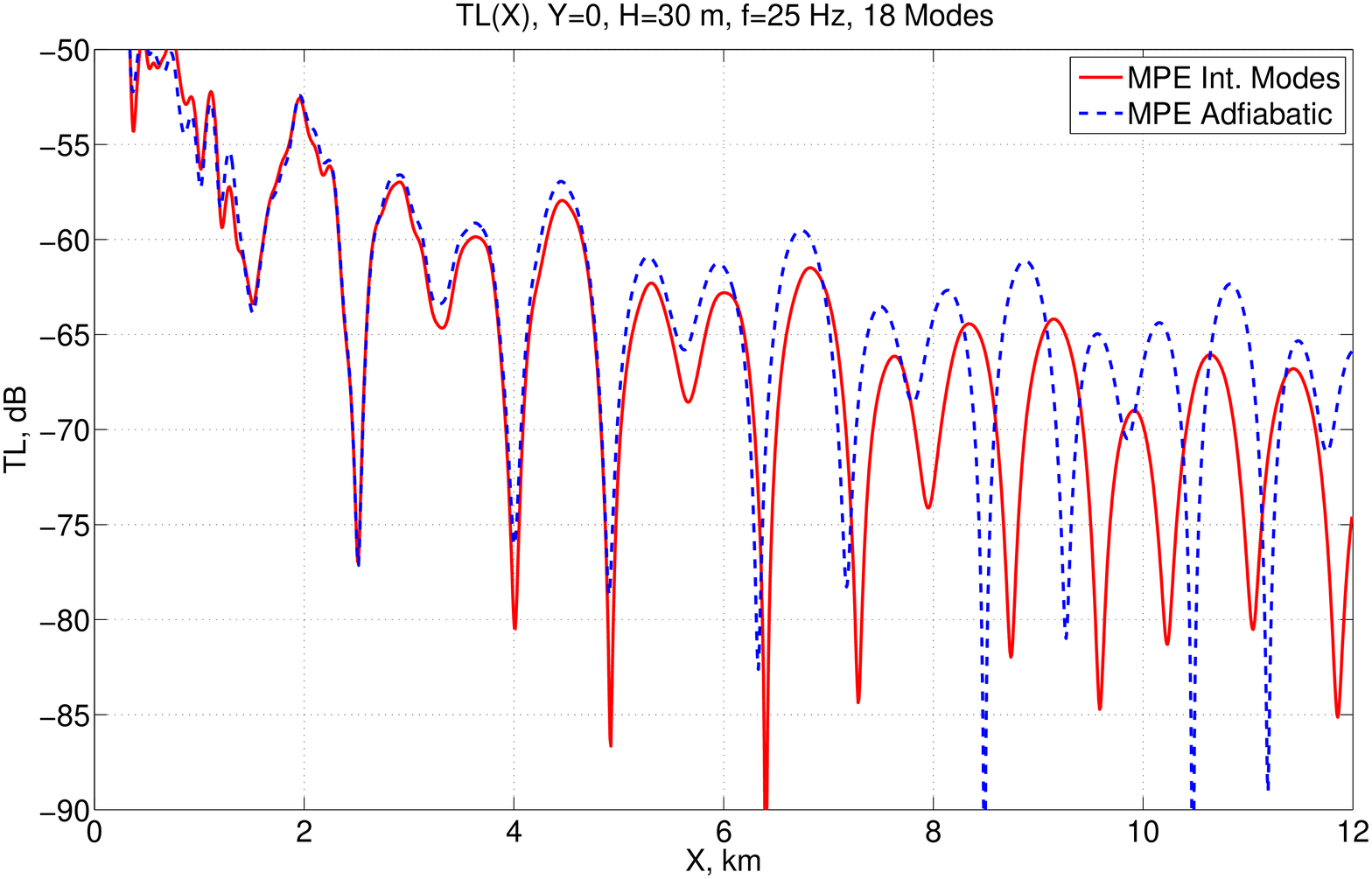}
\caption{Transmission loss for the ASA wedge, receiver depth=30 m.
Across slope propagation. Interacted modes vs. adiabatic modes.}
\end{center}
\end{figure}

\section{Conclusion}

In this article a mode parabolic equation method for resonantly interacted modes was developed.
The proposed method is an essential extension of the early proposed method of adiabatic mode parabolic
equation because it can serve all possible problems of shallow water acoustics.
The flow of acoustic energy is conserved for the derived equations with an accuracy adequate to the used approximation.
The proposed method was tested.
The testing calculations were done for ASA wedge benchmark and proved excellent agreement with COUPLE program.

\end{document}